\documentclass{article}
\usepackage[utf8]{inputenc}
\usepackage[]{amsmath}
\usepackage{amsthm}
\usepackage[margin=40mm, top=35mm, bottom=35mm, footnotesep=10mm]{geometry}
\usepackage{amssymb, amsfonts}
\usepackage[english]{babel} 
\usepackage{float}
\usepackage{subfigure}
\usepackage{graphicx}
\usepackage{color}
\usepackage{tabularx}
\usepackage{array}
\usepackage{natbib}

\newcolumntype{C}[1]{>{\centering\arraybackslash}p{#1}}

\newtheorem{thm}{Theorem}[section]
\newtheorem{rmk}{Remark} [section]
\newtheorem{ass}{Assumption} [section]
\allowdisplaybreaks

\newcommand{\p}{p_{cont}}
\newcommand{\pistar}{\pi^*}
\newcommand{\Pistar}{\Pi^*}
\newcommand{\astar}{a^*}

\newcommand{\jump}{\leavevmode\hphantom{0}}
\newcommand{\jumpminus}{\leavevmode\hphantom{-}}

\begin{document}
	
	\title{General Semi-Markov Model for Limit Order Books:\\
		Theory, Implementation and Numerics}

	\author{Anatoliy Swishchuk\thanks{Department of Mathematics and Statistics, University of Calgary, 2500 University Drive NW,Calgary, Alberta T2N 1N4, Canada, aswish@uclagary.ca},  Katharina Cera\thanks{Department of Mathematics, Technical University of Munich, Boltzmannstr. 3, 85747 Garching, Germany and Department of Mathematics and Statistics, University of Calgary, 2500 University Drive NW, Calgary, Alberta T2N 1N4, Canada, Katharina.Cera@mytum.de},  Julia Schmidt\thanks{Department of Mathematics, Technical University of Munich, Boltzmannstr. 3, 85747 Garching, Germany and Department of Mathematics and Statistics, University of Calgary, 2500 University Drive NW, Calgary, Alberta T2N 1N4, Canada, Julia.Schmidt@tum.de}  \text{ }and Tyler Hofmeister\thanks{Department of Mathematics and Statistics, University of Calgary, 2500 University Drive NW,Calgary, Alberta T2N 1N4, Canada, Tyler.Hofmeister@ucalgary.ca} }
	
	\date{August 2016}
	
	\maketitle
	
	\begin{abstract}
		The paper considers a general semi-Markov model for Limit Order Books with two states, which incorporates price changes that are not fixed to one tick. Furthermore, we introduce an even more general case of the semi-Markov model for Limit Order Books that incorporates an arbitrary number of states for the price changes. For both cases the justifications, diffusion limits, implementations and numerical results are presented for different Limit Order Book data: Apple, Amazon, Google, Microsoft, Intel on 2012/06/21 and Cisco, Facebook, Intel, Liberty Global, Liberty Interactive, Microsoft, Vodafone from 2014/11/03 to 2014/11/07. \\\\
		\textbf{Keywords:} limit order book; diffusion limit; market microstructure.
	\end{abstract}
	
	%\keywords{limit order book; diffusion limit; market microstructure.}

	%\maketitle
	
	\section{Introduction}
	
	One of the main approaches of modeling Limit Order Books is the zero intelligence approach (see \cite{survey}), which assumes all quantities of interest in the Limit Order Book are governed by stochastic processes. Of the zero-intelligence models developed so far, the approach of \cite{ContLarrad} is an attractive starting point for modeling limit order flow in continuous time due to the tractability of the model and it's reduced dimensionality. They calculate various quantities of interest such as the probability of a price increase or the diffusion limit of the price process.\\
	
	Having found evidence in empirical observations, the authors of \cite{nelson2015b} extended the framework of \cite{ContLarrad}. They incorporated an arbitrary distribution for the inter-arrival times of the book events as well as a dependency of both, the type of a book event and its corresponding inter-arrival times, on the type of the previous book event. Therefore they used a Markov renewal process to model the dynamics of the bid and ask queues, which are assumed to be independent of each other. After a price change they are reinitialized. The model remains analytically tractable. As in \cite{ContLarrad} the bid/ask spread remains equal to one tick and all orders have the same size.
	
	We briefly recap the model used in \cite{nelson2015b} highlighting the notations and definitions we will use in this paper. As the model from \cite{ContLarrad}, price changes are assumed to happen at each time $T_n$ at which the ask or the bid queue is depleted. The sojourn times are notated as $\tau_n:=T_n-T_{n-1}$. The changes in the queue sizes are modeled by a Markov chain $V^a$ for the ask side and $V^b$ for the bid side. Their state space is $\{-1,1\}$. When a limit order appears at time t, the queue size increases by one unit and $V^a_t=1$, when a market order or cancellation appears, it decreases by one and $V^a_t=-1$. The notations for the bid side are defined accordingly.
	
	The paper defines a balanced and a unbalanced case. The classification is dependent on the transition probabilities of the Markov chain modeling the queue sizes:
	\[P^a(i,j):=P[V_{k+1}^a=j|V_k^a=i],\text{ }i,j\in \{-1,1\}.\]
	$P_t^b$ is defined accordingly.
	
	The balanced case is defined in the following way: $P^a(1,1)=P^a(-1,-1)$ and $P^b(1,1)=P^b(-1,-1)$. The unbalanced case is on hand if $P^a(1,1)<P^a(-1,-1)$ or $P^b(1,1)<P^b(-1,-1)$.
	
	Built on this model, \cite{nelson2015b} proposes the following jump model for the stock price $s_t$ based on a counting process $N(t)$ and a Markov Chain $X_t$:
	\begin{align*}
	s_t=\sum_{k=1}^{N(t)}X_k,
	\end{align*}
	where $N(t)$ counts how often the price changes and $X_t$ keeps track of which direction the price changed at each time of $T_n$, meaning $X_t \in \{-\delta,\delta\}$. A price change is assumed to happen at every time the bid or the ask queue of the Limit Order Book is depleted.
	
	We note that results of \cite{nelson2016b} were fist announced at IPAM FMWSI, UCLA, March 23-27, 2015 (see \cite{nelson2015a}). Also available at SRRN (see \cite{nelson2015b}) and arXiv (see \cite{nelson2016a}). The paper was submitted to SIAM Journal of Financial Mathematics in April 2015 and an extended version was resubmitted in March 2016.
	
	\subsection{Motivation for generalizing the model}
	
	As in \cite{ContLarrad}, \cite{nelson2015b} assume that all price changes occurring in the price process are of magnitude $\delta$, a single tick. Table  \ref{price_changesMID} demonstrates the average price change for each stock in our data set. For Apple midprice data we observe 53,654 price changes, from which only 9007 are of magnitude $\delta$. Meaning, 83.21\% (44,647) of the price changes were different from one tick. 
	
	\setlength{\tabcolsep}{2mm}
	\begin{table}[h]\centering
		\begin{tabular}{lccc}
			& Apple & Amazon & Google \\ \hline
			Avg. up movements & \jumpminus\jump1.7 & \jumpminus\jump1.3 & \jumpminus\jump3.1  \\ 
			Avg. down movements & \jump-1.7 & \jump-1.3 &\jump -3.0  \\ 
			Min price change & -18.5 & -11.5 & -30.5  \\ 
			Max price change& \jumpminus15.0 & \jumpminus16.5 & \jumpminus30.5 \\
			\hline
		\end{tabular}
		\caption{Mid-Price Changes in ticks} \label{price_changesMID}
	\end{table}
	
	Possible extensions of the proposed model were discussed in \cite{nelson2016b} for the case when the size of price changes is not fixed, including the diffusion limit of the price dynamics in this case. We will illustrate their model with the corresponding proofs and another model extension generalizing the model to allowing for more than two price changes.
	
	For both model extensions we show results for the diffusion coefficients and verify them using the same approach as in \cite{ContLarrad}.
	
	\subsection{Data}
	
	To test empirical validity of our more general model we use the following freely available data:
	
	\begin{itemize}
		\item Level 1 LOB data provided by \cite{lobster}: Apple, Amazon, Google, Microsoft and Intel on 2012/06/21
		\item LOB data provided in \cite{Book}: Cisco, Facebook, Intel, Liberty Global, Liberty Interactive, Microsoft, Vodafone from 2014/11/03 to 2014/11/07
	\end{itemize}
	
	\subsection{Structure of this paper}
	
	The rest of the paper is organized as follows. Section 2 reviews some of the assumptions from \cite{ContLarrad} and \cite{nelson2015b} with respect to our new data sets. The following sections 3 and 4 present two model extensions which generalize the model. Section 3 considers the general semi-Markov model for the Limit Order Books proposed by \cite{nelson2016b}. It incorporates two states modeling price changes that are not fixed to one tick. The section includes diffusion limits (sec. 3.1), implementations (sec. 3.2) and numerical results for \cite{lobster} data (sec. 3.3). Section 4 deals with an even more general semi-Markov model for Limit Order Books that incorporates an arbitrary number of states for the price changes. It includes a justification (sec. 4.1), diffusion limits (sec. 4.2), implementations (sec. 4.3) and numerical results for \cite{lobster} data (sec. 4.4). Section 5 discusses some empirical findings regarding the spread. Section 6 concludes the paper and highlights future work.
	
	\section{Reviewing the assumptions with our new data sets}
	
	The objective of this section is to test the validity of the assumptions of \cite{nelson2015b} and \cite{ContLarrad} with respect to new data. 
	
	\subsection{Liquidity of our data}
	
	Table \ref{liquidity} demonstrates the liquidity of the new data. Note that in calculating the average number of orders occurring in 10 seconds, we limited the orders to only those occurring at the best bid or ask price, i.e. at level 1.
	
	\begin{table}[h]\centering
		\begin{tabular}{lcc}
			& Average no. of orders in 10s & Price Changes in 1 day\\
			\hline
			AAPL & \jump51 & 64,350\\
			AMZN & \jump25 & 27,557\\
			GOOG & \jump 21 & 24,084\\
			INTC & 173 & \jump3,217\\
			MSFT & 176 & \jump4,060\\
			\hline
		\end{tabular}
		\caption{Stock liquidity of AAPL, AMZN, GOOG, INTC, and MSFT on 2012/06/21 \label{liquidity}}
	\end{table}	
	
	While the average number of orders in 10 seconds is significantly less than that reported in \cite{ContLarrad}, we note that most of these equities undergo more price changes in one day. The high number of daily price changes implies that we can use asymptotic analysis in order to approximate long-run volatility using order flow by finding the diffusion limit of the price process.

	\subsection{Empirical distributions of initial queue sizes and calculated conditional probabilities}
	
	Like \cite{ContLarrad} and \cite{nelson2015b}, the generalized model uses empirical distribution functions $f(q_t^b,q_t^a)$ and $\tilde{f}(q_t^b,q_t^a)$ to initialize the bid and ask queues after either a price increase, or decrease. In order to generalize their models, we trim our data to only points where the spread is a single tick. The resulting empirical distribution $f(q_t^b,q_t^a)$ for INTC from \cite{lobster} is displayed in figure \ref{empirical} (left).
	
	One of the accomplishments of \cite{ContLarrad} is the formula for the conditional probability of a price increase conditional on the state of the order book,
	$$ p_1^{up}(n,p) = \dfrac{1}{\pi}\int_{0}^{\pi} \left( 2 - \cos(t) - \sqrt{(2-cos(t))^2 - 1}\right)^p \dfrac{\sin(nt)\cos\left(\frac{t}{2}\right)}{\sin\left(\frac{t}{2}\right)}dt. $$
	We can compare this to the  quantity calculated in \cite{nelson2015b} as
	$$ p_1^{up}(n,p) = \int_{0}^{\infty}f_{p,a}(t)(1-F_{n,b}(t))dt$$
	where 
	\begin{align*}
	f_{p,a}(t) & = \dfrac{1}{2\pi}\int_{\mathbb{R}} e^{itx}\varphi^a(x,p)dx\\ 
	F_{n,b}(t) & = \dfrac{1}{2}- \dfrac{1}{\pi}\int_{0}^{\infty} \dfrac{1}{x}Im\{e^{-itx}\varphi^b(x,n)\}dx\\
	\end{align*}
	and $\varphi^a(t), \varphi^b(t)$ are the characteristic functions of $\sigma_a$ and $\sigma_b$, respectively.
	
	Figure \ref{empirical} (right) shows, for INTC, the conditional probability of a price increase conditional on the size of bid and ask queues, as calculated using the formula from \cite{ContLarrad}.
	
	\begin{figure}[t]
		\subfigure{\includegraphics[width=0.49\textwidth]{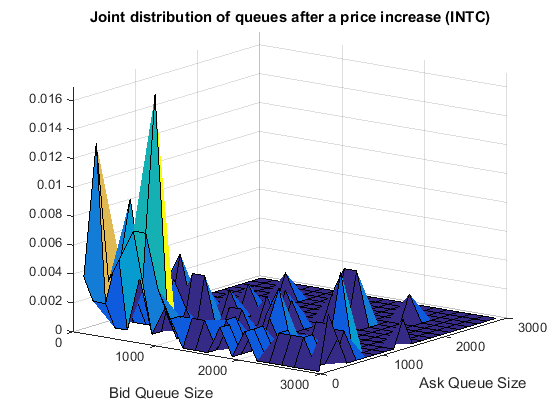}}
		\subfigure{\includegraphics[width=0.49\textwidth]{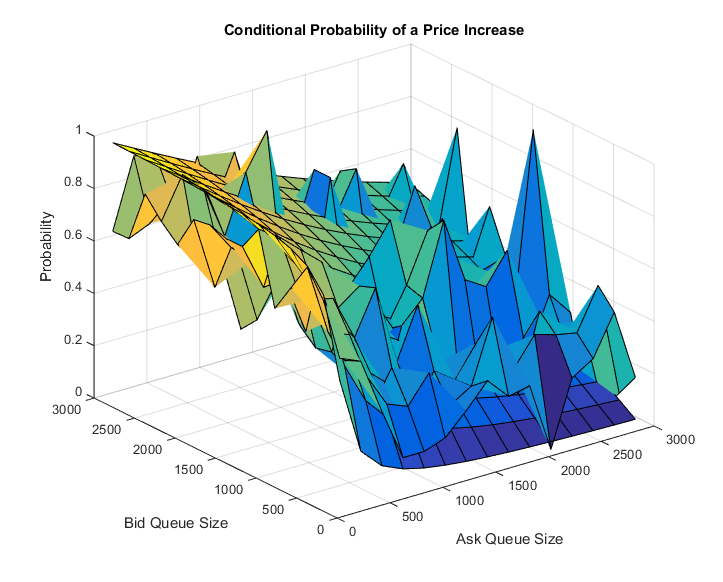}}
		\vspace*{8pt}
		\caption{Left: Empirical Joint Density after a price increase,  $f(q_t^b,q_t^a)$  Right: Conditional probability of a price increase conditional on size of bid and ask queues}\label{empirical}
	\end{figure}
	
	While the empirical data does not fit $p_1^{up}(n,p)$ as closely as in \cite{ContLarrad}, we can see that the calculated value from the model still matches the empirical frequencies to some extent.
	
	\subsection{Inter-arrival times of book events}
	
	While \cite{ContLarrad} assume the inter-arrival times between book events follow independent exponential distributions, \cite{nelson2015b} challenges this assumption. We have calculated the empirical distribution functions of the inter-arrival times between book events for our data and get the same result as \cite{nelson2015b}: The exponential distribution does not fit the data as well as alternative distributions do. Figure \ref{interarrival} illustrates the example of Amazon. We have similar figures for Apple, Google, Intel and Microsoft illustrating the same finding.
	
	\begin{figure}[t]
		\subfigure{\includegraphics[width=0.49\textwidth]{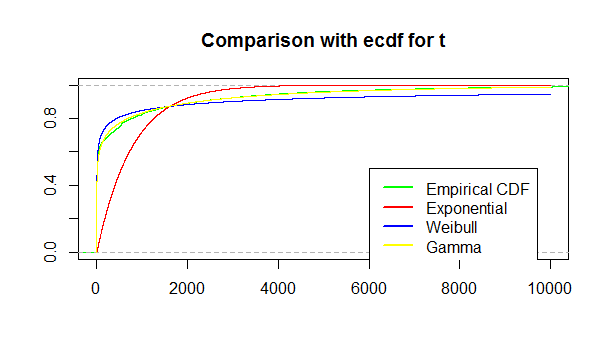}}
		\subfigure{\includegraphics[width=0.49\textwidth]{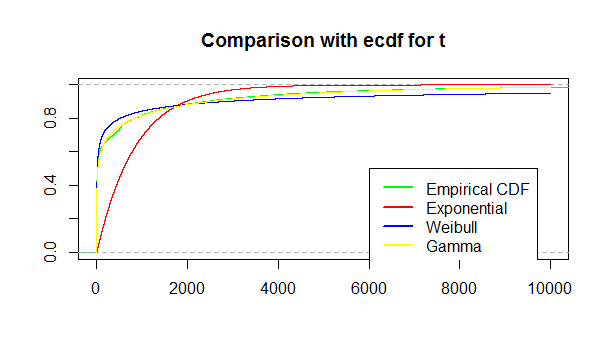}}
		\vspace*{8pt}
		\caption{Distribution of inter-arrival times Amazon ask and bid}\label{interarrival}
	\end{figure}
	
	\subsection{Asymptotic analysis}
	
	The asymptotic analysis presented in \cite{ContLarrad} is another strength of their paper. Using the relevant formula when the rate of incoming limit orders is assumed to equal the combined rate of incoming market orders and cancellations, we compute the diffusion coefficients for our new data. \cite{ContLarrad} demonstrates the linear relationship between $\sqrt{\frac{\lambda}{D(f)}}$ and the 10 minute standard deviation of various equities, where $\lambda$ is the intensity of incoming orders and $D(f)$ is the square of the average depth of the bid and ask queues after a price change. Figure \ref{linearRel} contains the same plot generated for the new data. The linear relationship between $\sqrt{\frac{\lambda}{D(f)}}$ and the 10 minute standard deviation is not as significant for the new data set. We will see later on a huge improvement for our considered model extensions.
	\begin{figure}[t]
		\vspace*{-7pt}
		\centerline{\includegraphics[width=0.55 \textwidth]{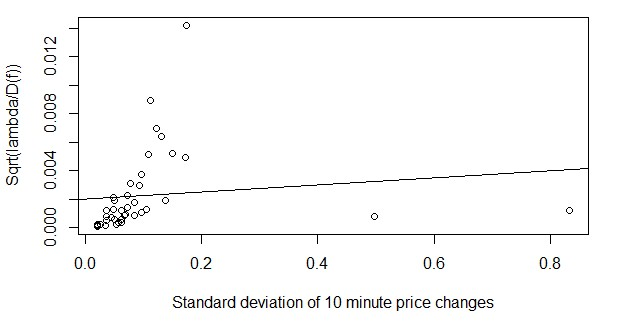}}
		\vspace*{8pt}
		\caption{$\sqrt{\lambda/D(f)}$ compared to 10 minute standard deviation}
		\label{linearRel}
	\end{figure}
	
	\section{General Semi-Markov Model for the Limit Order Book with two states \label{Two_States}}\label{twostates}
	
	We develop the model proposed in \cite{nelson2016b} (see sec. 6: Discussion), incorporating price changes which are not fixed to one tick. The authors already introduced the following model for the price process:
	
	\begin{align*}
	s_t=\sum_{k=1}^{N(t)}a(X_k),
	\end{align*}
	
	where $N(t)$ is the counting process for the price changes, $X_k$ is a two state Markov chain with state space $\mathcal{S}=\{1,2\}$ and $a(x)$ is an uniformly bounded function. To consider one tick spreads we set $a(1)=\delta$ and $a(2)=-\delta$.
	
	\subsection{Diffusion Limits}\label{Diffusion_Limits}
	
	For the calculation of the diffusion limits we will use the following two assumptions from \cite{nelson2015b}.
	
	\begin{ass}\label{A4}
		We assume the following inequalities to be true:
		\[\sum_{n=1}^{\infty}\alpha^b(n)\alpha^a(p)f(n,p)<\infty,\]
		\[\sum_{n=1}^{\infty}\alpha^b(n)\alpha^a(p)\tilde{f}(n,p)<\infty,\]
		with \(\alpha^a(n):=\frac{1}{p_a\sqrt{\pi}}(n+\frac{2p_a-1}{p_a-1}v_0^a(1))\sqrt{p_a(1-p_a)}\sqrt{p_ah_1^a+(1-p_a)h_2^a},\) $h_1^a$ is defined on page 7 in \cite{nelson2015b}, $p_a=P^a(1,1)=P^a(-1,-1)$, $\alpha^b(n)$ is defined accordingly.
	\end{ass}
	
	\begin{ass}\label{A5}
		In this section we assume
		$m(1)<\infty$ and $ m(2)<\infty$, where $m(i)=E[\tau_k|X_{k-1}=i], i\in\{1,2\}$.
	\end{ass}
	
	\begin{thm}\label{Theorem1}
		Given that assumption \ref{A4} is satisfied for the balanced case and assumption \ref{A5} for the unbalanced case, we can proof the following weak convergences in the Skorokhod topology (see \cite{Skorokhold}):
		\begin{center}
			$\left(\dfrac{s_{tnlog(n)}-N_{tnlog(n)}a^*}{\sqrt{n}},t\ge 0\right)\overset{n\rightarrow \infty}{\Rightarrow} \dfrac{\sigma^*}{\sqrt{\tau^*}}W_t$, for the balanced case and \\
			$\left(\dfrac{s_{tn}-N_{tn}a^*}{\sqrt{n}},t\ge 0\right)\overset{n\rightarrow \infty}{\Rightarrow} \dfrac{\sigma^*}{\sqrt{m_{\tau}}}W_t$, for the unbalanced case, \\
		\end{center}
		where \(W_t\) is a standard Brownian motion, $a_i:=a(i)$, \(a^*=\pistar_1a_1+\pistar_2a_2\)and \((\pistar_1,\pistar_2)\) is the stationary distribution of the Markov chain $a(X)$. $\tau^*$, $m_{\tau}$ and $(\sigma^*)^2$ are given by:
		\begin{align*}
		\tau^*&=\lim_{t\rightarrow+\infty}\dfrac{t}{N_tlog(N_t)} \text{ (see \cite{nelson2015b}, p.19)}\\
		m_{\tau}&=\pistar_1m(1)+\pistar_2m(2)\\
		(\sigma^*)^2&=\pistar_1a_1^2+\pistar_2a_2^2+(\pistar_1a_1+\pistar_2a_2)[-2a_1\pistar_1-2a_2\pistar_2+(\pistar_1a_1+\pistar_2a_2)(\pistar_1+\pistar_2)]\\
		&+\dfrac{(\pistar_1(1-\p)+\pistar_2(1-\p'))(a_1-a_2)^2}{(\p+\p'-2)^2}\\
		&+2(a_2-a_1)\cdot\Bigg[\dfrac{\pistar_2a_2(1-\p')-\pistar_1a_1(1-\p)}{\p+\p'-2}\\
		&+\dfrac{(\pistar_1a_1+\pistar_2a_2)(\pistar_1-\p\pistar_1-\pistar_2+\p'\pistar_2)}{\p+\p'-2}\Bigg]
		\end{align*}
		
		$p_{cont}$ is the probability of two subsequent increases of the stock price, $p_{cont}'$ the probability of two subsequent decreases of the stock price.
		
	\end{thm}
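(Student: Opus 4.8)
The plan is to read the centered process $s_t - N(t)\astar = \sum_{k=1}^{N(t)}\bigl(a(X_k)-\astar\bigr)$ as an additive functional of the two-state chain $X_k$ run along the random clock $N(t)$, and to obtain both limits from a single functional central limit theorem (FCLT) for the chain, composed with the appropriate (asymptotically deterministic) time change. First I would pin down the stationary structure: the direction chain has transition matrix with diagonal entries $\p,\p'$, so its non-unit eigenvalue is $\p+\p'-1$ and its stationary law is $(\pistar_1,\pistar_2)=\bigl(\tfrac{1-\p'}{2-\p-\p'},\tfrac{1-\p}{2-\p-\p'}\bigr)$. This gives $\astar=\pistar_1a_1+\pistar_2a_2=\mathbb{E}_\pi[a(X_k)]$, so the displayed numerator is genuinely centered and the FCLT will produce a driftless Brownian limit.

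Next I would establish the Donsker-type statement $n^{-1/2}\sum_{k=1}^{\lfloor ns\rfloor}\bigl(a(X_k)-\astar\bigr)\Rightarrow \sigma^* W_s$ in $D[0,\infty)$. For a finite, irreducible chain with bounded $a$ this is the standard Markov-chain FCLT, so the only real work here is identifying $(\sigma^*)^2$ as the long-run variance $\mathrm{Var}_\pi\bigl(a(X_0)\bigr)+2\sum_{k\ge1}\mathrm{Cov}_\pi\bigl(a(X_0),a(X_k)\bigr)$. Because the autocovariances decay like $(\p+\p'-1)^k$, this series is geometric and sums to a closed form carrying the factor $(\p+\p'-2)^{-2}$; collecting the stationary-variance, mean-correction and covariance contributions should reproduce the stated expression for $(\sigma^*)^2$ term by term.

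I would then treat the random clock in the two regimes. In the unbalanced case the sojourn times have finite conditional means, so Assumption \ref{A5} together with the renewal law of large numbers yields $N_{tn}/n\to t/m_\tau$ with $m_\tau=\pistar_1m(1)+\pistar_2m(2)$. In the balanced case the queue behaves as a recurrent symmetric walk whose depletion times have infinite mean, which is exactly the source of the logarithmic correction; invoking $t/(N_t\log N_t)\to\tau^*$ from \cite{nelson2015b} (p.~19) I would deduce $N_{tn\log n}/n\to t/\tau^*$, with Assumption \ref{A4} guaranteeing that the constants entering $\tau^*$ are finite. Finally, since in each case the rescaled clock converges to a \emph{deterministic} linear limit, a random-time-change (composition) theorem in the Skorokhod space gives $n^{-1/2}\sum_{k=1}^{N(\cdot)}\bigl(a(X_k)-\astar\bigr)\Rightarrow \sigma^* W_{t/\tau^*}$ (resp. $\sigma^* W_{t/m_\tau}$), and Brownian scaling $W_{ct}\overset{d}{=}\sqrt{c}\,W_t$ converts these into $\tfrac{\sigma^*}{\sqrt{\tau^*}}W_t$ and $\tfrac{\sigma^*}{\sqrt{m_\tau}}W_t$.

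The main obstacle I anticipate is the balanced case: because the clock $N$ and the summed chain $X$ are driven by the same order flow, the composition step cannot assume independence, and I must lean on the fact that the limiting time change is deterministic to push the joint convergence through. Establishing $N_{tn\log n}/n\to t/\tau^*$ rigorously from the infinite-mean (stable-$\tfrac12$ domain) renewal asymptotics of \cite{nelson2015b}, and verifying that this convergence holds jointly with the FCLT so that the composition theorem applies, is the delicate point; by contrast, the explicit evaluation of $(\sigma^*)^2$ is lengthy but mechanical.
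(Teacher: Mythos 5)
Your proposal is correct, and its skeleton coincides with the paper's: a law of large numbers / clock asymptotic for $N$ (balanced case with the logarithmic correction via $t/(N_t\log N_t)\to\tau^*$, unbalanced case via the renewal LLN with $m_\tau$), an FCLT for the linearly interpolated partial sums $R_n=\sum_{k\le n}(a(X_k)-\astar)$, and a composition/time-change step that the paper, like you, does not redo but delegates to Propositions 8 and 10 of Swishchuk--Vadori — including the point you rightly flag, that $N$ and $X$ are dependent and only the determinism of the limiting clock saves the composition. Where you genuinely diverge is the identification of $(\sigma^*)^2$: the paper imports the Poisson-equation form from the inhomogeneous semi-Markov CLT of Swishchuk--Vadori, solving $g=(P+\Pistar-I)^{-1}(a-\astar)$ by explicit $2\times 2$ inversion and then grinding through the algebra for $\pistar_1 v(1)+\pistar_2 v(2)$ — this computation is essentially the entire content of the paper's proof — whereas you propose the spectral route, summing the geometric autocovariance series with ratio $\p+\p'-1$. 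The two routes agree: your series sums to $\pistar_1\pistar_2(a_1-a_2)^2\,\dfrac{\p+\p'}{2-\p-\p'}$, and the paper's displayed $(\sigma^*)^2$ collapses to the same quantity once one uses $\pistar_1+\pistar_2=1$ and the detailed-balance identity $\pistar_1(1-\p)=\pistar_2(1-\p')$ (which kills the last bracket entirely), so the step you deferred as ``mechanical'' does close. Your route is shorter and more transparent for two states, making the $(\p+\p'-2)^{-2}$ factor visible as geometric summation; the paper's fundamental-matrix form is clumsier here but is chosen deliberately, since it generalizes verbatim to the $n$-state model of Section 4, where no closed-form eigenstructure is available. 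One caution: the theorem as stated does not assume the simplifications $\pistar_1+\pistar_2=1$ in its formula (hence the redundant-looking $(\pistar_1+\pistar_2)$ terms), so if you verify your closed form against the paper's you must apply those identities first rather than expect a literal term-by-term match.
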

	
	\begin{proof}
		We get the following law of large numbers:
		\begin{align*}
		\dfrac{s_{tnlog(n)}}{n} \overset{n\rightarrow \infty}{\Rightarrow} \dfrac{a^* t}{\tau^*}.
		\end{align*}
		The proof follows that of Proposition 8 in \cite{nelson2015b}, except of the calculation for $\sigma$.  For $t\in R_+$ we first of all consider the following processes:
		\[R_n:=\sum_{k=1}^n(a(X_k)-a^*),\]
		\[U_n(t):=n^{-1/2}[(1-\lambda_{n,t})R_{\lfloor{nt}\rfloor}+\lambda_{n,t}R_{\lfloor{nt}\rfloor+1}],\]
		where \(\lambda_{n,t}:= nt-\lfloor{nt}\rfloor\). We can show the following weak convergence in the Skorokhod topology (see \cite{Skorokhold}) similar to the approach of \cite{swishtheory}:
		\[(U_n(t),t\ge0)\overset{n\rightarrow \infty}{\Rightarrow}\sigma^* W,\]
		where W is a standard Brownian motion, and \(\sigma\) is given by:
		\[(\sigma^*)^2:=\sum_{i\in\{1,2\}}\pi^*_iv^*(i)\]
		\[ v^*(1):=(a_1-a^*)^2+p(1)(g_2-g_1)^2-2(a_1-a^*)p(1)(g_2-g_1),\]
		\[ v^*(2):=(a_2-a^*)^2+p(2)(g_1-g_2)^2-2(a_2-a^*)p(2)(g_1-g_2),\]
		\[
		\begin{pmatrix}
		g_1\\g_2
		\end{pmatrix}
		=(P+\Pi^*-I)^{-1}
		\begin{pmatrix}
		a_1-a^* \\
		a_2-a^*
		\end{pmatrix},\]
		\[p(1):=1-p_{cont}, p(2)=1-p_{cont}',\]
		\(\Pi^*\) is the matrix of the stationary distribution consisting of rows equal to \((\pi^*_1 \text{     	      } \pi^*_2).\)\\
		%After completing the calculations, we get:
		%\begin{align*}
		%(\sigma^*)^2&=\pistar_1a_1^2+\pistar_2a_2^2+(\pistar_1a_1+\pistar_2a_2)[-2a_1\pistar_1-2a_2\pistar_2+(\pistar_1a_1+\pistar_2a_2)(\pistar_1+\pistar_2)]\\
		%&+\dfrac{(\pistar_1(1-\p)+\pistar_2(1-\p'))(a_1-a_2)^2}{(\p+\p'-2)^2}\\
		%&+2(a_2-a_1)\cdot\Bigg[\dfrac{\pistar_2a_2(1-\p')-\pistar_1a_1(1-\p)}{\p+\p'-2}\\
		%&+\dfrac{(\pistar_1a_1+\pistar_2a_2)(\pistar_1-\p\pistar_1-\pistar_2+\p'\pistar_2)}{\p+\p'-2}\Bigg]
		%\end{align*}
		
		We get the following calculation for g:
		\begin{align*}
		g:=&(P+\Pi^*-I)^{-1}(a-a^*)\\
		=&
		\left[
		\begin{pmatrix}
		\p & 1-\p\\
		1-\pi' & \p'
		\end{pmatrix}
		+
		\begin{pmatrix}
		\pistar_1 & \pistar_2\\
		\pistar_1 & \pistar_2
		\end{pmatrix}
		-
		\begin{pmatrix}
		1&0\\
		0&1
		\end{pmatrix}
		\right]^{-1}
		\begin{pmatrix}
		a_1-a^*\\
		a_2-a^*
		\end{pmatrix}\\
		%\begin{split}
		=&\dfrac{1}{(\p+\pistar_1-1)(\p'+\pistar_2-1)-(1-\p+\pistar_2)(1-\p'+\pistar_1)}\\
		\cdot&\begin{pmatrix}
		\p+\pistar_2-1 & \p-\pistar_2-1\\
		\p'-\pistar_1-1 & \p+\pistar_1-1
		\end{pmatrix}
		\begin{pmatrix}
		a_1-a^*\\
		a_2-a^*
		\end{pmatrix}\\
		%\end{split}
		%\begin{split}
		=&\begin{pmatrix}
		\dfrac{a_1(p_{cont}'+\pi^*_2-1)+a_2(p_{cont}-\pi^*_2-1)}{(\pi^*_1+\pi^*_2)(p_{cont}+p_{cont}'-2)}-\dfrac{a^*}{\pi^*_1+\pi^*_2}\\[1.5em]
		\dfrac{a_1(p_{cont}'-\pi^*_1-1)+a_2(p_{cont}+\pi^*_1-1)}{(\pi^*_1+\pi^*_2)(p_{cont}+p_{cont}'-2)}-\dfrac{a^*}{\pi^*_1+\pi^*_2}
		\end{pmatrix}
		%\end{split}    
		\end{align*}\\
		
		In order to get a nice form for $(\sigma^*)^2$ we firstly calculate the summand for $i=1$:
		
		\begin{align*}
		\pistar_1v(1)&=\pistar_1 \left[(a_1-a^*)^2+(1-\p)\left(\dfrac{a_1(\p'-\pistar_1-1)+a_2(\p+\pistar_1-1)}{(\p+\p'-2)(\pistar_1+\pistar_2)}\right.\right.\\
		&\left.\left.-\dfrac{a^*}{\pistar_1+\pistar_2}
		-\dfrac{a_1(\p'+\pistar_2-1)+a_2(\p-\pistar_2-1)}{(\p+\p'-2)(\pistar_1+\pistar_2)}+\dfrac{a^*}{\pistar_1+\pistar_2}\right)^2\right.\\
		&\left.-2(a_1-a^*)(1-\p)\left(\dfrac{a_1(\p'-\pistar_1-1)+a_2(\p+\pistar_1-1)}{(\p+\p'-2)(\pistar_1+\pistar_2}-\dfrac{a^*}{\pistar_1+\pistar_2}\right.\right.\\
		&\left.\left.-\dfrac{a_1(\p'+\pistar_2-1)+a_2(\p-\pistar_2-1)}{(\p+\p'-2)(\pistar_1+\pistar_2)}+\dfrac{a^*}{\pistar_1+\pistar_2}\right)\right]\\
		&=\pistar_1\left[(a_1-a^*)^2+(1-\p)\left(\dfrac{a_2-a_1}{\p+\p'-2}\right)^2\right.\\
		&\left.-2(a_1-a^*)(1-\p)\left(\dfrac{a_2-a_1}{\p+\p'-2}\right)\right]\\
		&=\pistar_1\left[(a_1-(\pistar_1a_1+\pistar_2a_2))^2+(1-\p)\dfrac{(a_2-a_1)^2}{(\p+\p'-2)^2}\right.\\
		&\left.-2(a_1-(\pistar_1a_1+\pistar_2a_2))(1-\p)\dfrac{a_2-a_1}{\p+\p'-2}\right]
		\end{align*}
		Similarly we calculate the summand for $i=2$:
		\begin{align*}
		\pistar_2v(2)&=\pistar_2\left[(a_2-a^*)^2+(1-\p')
		\left(\dfrac{a_1(\p'+\pistar_2-1)+a_2(\p-\pistar_2-1)}{(\p+\p'-2)(\pistar_1+\pistar_2)}\right.\right.\\
		&\left.\left.-\dfrac{a^*}{\pistar_1+\pistar_2}-\dfrac{a_1(\p'-\pistar_1-1)+a_2(\p+\pistar_1-1)}{(\p+\p'-2)(\pistar_1+\pistar_2)}+\dfrac{a^*}{\pistar_1+\pistar_2}\right)^2\right.\\
		&\left.-2(a_2-a^*)(1-\p')\left(\dfrac{a_1(\p'+\pistar_2-1)+a_2(\p-\pistar_2-1)}{(\p+\p'-2)(\pistar_1+\pistar_2)}-\dfrac{a^*}{\pistar_1+\pistar_2}\right.\right.\\
		&\left.\left.-\dfrac{a_1(\p'-\pistar_1-1)+a_2(\p+\pistar_1-1)}{(\p+\p'-2)(\pistar_1+\pistar_2)}+\dfrac{a^*}{\pistar_1+\pistar_2}\right)\right]\\
		&=\pistar_2\left[(a_2-a^*)^2+(1-\p')\left(\dfrac{a_1-a_2}{\p+\p'-2}\right)^2\right.\\
		&\left.-2(a_2-a^*)(1-\p')\left(\dfrac{a_1-a_2}{\p+\p'-2}\right)\right]\\
		&=\pistar_2\left[(a_2-(\pistar_1a_1+\pistar_2a_2))^2+(1-\p')\dfrac{(a_1-a_2)^2}{(\p+\p'-2)^2}\right.\\
		&\left.-2(a_2-(\pistar_1a_1+\pistar_2a_2))(1-\p')\dfrac{a_1-a_2}{\p+\p'-2}\right]
		\end{align*}
		From this it follows:
		\begin{align*}
		\sigma^2&=\pistar_1v(a_1)+\pistar_2v(a_2)\\
		&=\pistar_1\left(a_1^2-2a_1(\pistar_1a_1+\pistar_2a_2)+(\pistar_1a_1+\pistar_2a_2)^2\right)\\
		&+\pistar_2\left(a_2^2-2a_2(\pistar_1a_1+\pistar_2a_2)+(\pistar_1a_1+\pistar_2a_2)^2\right)\\
		&+\dfrac{(\pistar_1(1-\p)+\pistar_2(1-\p'))(a_1-a_2)^2}{(\p+\p'-2)}\\
		&+2(a_2-a_1)\left[\dfrac{\pistar_2a_2(1-\p')-\pistar_1a_1(1-\p)}{\p+\p'-2}\right.\\
		&\left.+\dfrac{(\pistar_1a_1+\pistar_2a_2)(\pistar_1-\p\pistar_1-\pistar_2+\p'\pistar_2)}{\p+\p'-2}\right]\\
		&=\pistar_1a_1^2+\pistar_2a_2^2+(\pistar_1a_1+\pistar_2a_2)[-2a_1\pistar_1-2a_2\pistar_2+(\pistar_1a_1+\pistar_2a_2)(\pistar_1+\pistar_2)]\\
		&+\dfrac{(\pistar_1(1-\p)+\pistar_2(1-\p'))(a_1-a_2)^2}{(\p+\p'-2)^2}\\
		&+2(a_2-a_1)\left[\dfrac{\pistar_2a_2(1-\p')-\pistar_1a_1(1-\p)}{\p+\p'-2}\right.\\
		&\left.+\dfrac{(\pistar_1a_1+\pistar_2a_2)(\pistar_1-\p\pistar_1-\pistar_2+\p'\pistar_2)}{\p+\p'-2}\right]
		\end{align*}\\\\
		%The explicit expressions for g are:
		%\begin{align*}
		%  g_1&=\dfrac{a_1(p_{cont}'+\pi^*_2-1)+a_2(p_{cont}-\pi^*_2-1)}{(\pi^*_1+\pi^*_2)(p_{cont}+p_{cont}'-2)}-\dfrac{a^*}{\pi^*_1+\pi^*_2} \\
		%g_2&=\dfrac{a_1(p_{cont}'-\pi^*_1-1)+a_2(p_{cont}+\pi^*_1-1)}{(\pi^*_1+\pi^*_2)(p_{cont}+p_{cont}'-2)}-\dfrac{a^*}{\pi^*_1+\pi^*_2} 
		%\end{align*} \label{g_calc}
		The continuing proof directly follows the proof of Proposition 8 and Proposition 10 in \cite{nelson2015b}.
	\end{proof}
	
	\begin{rmk}
		When inserting $a_1=\delta, a_2=-\delta$, we get $a^*=s^*, \pistar_1=\pistar$ and $\pistar_2=1-\pistar$ and therewith 
		
		\begin{align*}
		\sigma^2=4\delta^2\left(\dfrac{1-\p'+\pistar(\p'-\p)}{(\p+\p'-2)^2}-\pistar(1-\pistar)\right).
		\end{align*}\label{calc_sig_alt}
		This is the same result as in Proposition 8 in \cite{nelson2015b}, which shows that our model is a generalization of their model.
	\end{rmk}
	
	\begin{proof}
		When inserting $a_1=\delta, a_2=-\delta$ in $\sigma^2$ we get $a^*=s^*, \pistar_1=\pistar$ and $\pistar_2=1-\pistar$ and therewith
		\begin{align*}
		\sigma^2&=\pistar\delta^2+(1-\pistar)(-\delta)^2+(\pistar\delta+(1-\pistar)(-\delta))(-2\pistar\delta-2(1-\pistar)(-\delta)\\
		&+(\pistar\delta+(1-\pistar)(-\delta))(\pistar+(1-\pistar)))\\
		&+\dfrac{(\pistar(1-\p)+(1-\pistar)(1-\p'))(\delta-(-\delta))^2}{(\p+\p'-2)^2}\\
		&+2(-\delta-\delta)\left[\dfrac{(1-\pistar)(-\delta)(1-\p')-\pistar\delta(1-\p)}{\p+\p'-2}\right.\\
		&\left.+\dfrac{(\pistar\delta+(1-\pistar)(-\delta))(\pistar-\p\pistar-(1-\pistar)+(1-\pistar)\p')}{\p+\p'-2}\right]\\
		&=\delta^2+(2\pistar\delta-\delta)(-2\pistar\delta+\delta)+4\delta^2\dfrac{1-\p'+\pistar\p'-\pistar\p}{(\p+\p'-2)^2}\\
		&-4\delta^2\dfrac{4(\pistar)^2-4\pistar-2(\pistar)^2\p-2(\pistar)^2\p'+2\pistar\p+2\pistar\p'}{\p+\p'-2}\\
		&=4\pistar\delta^2-4(\pistar)^2\delta^2+4\delta^2\left(\dfrac{1-\p'+\pistar(\p'-\p)}{(\p+\p'-2)^2}\right.\\
		&\left.-\dfrac{2\pistar(2\pistar-2-\pistar\p-\pistar\p'+\p+\p')}{\p+\p'-2}\right)\\
		&=4\delta^2\left(\pistar-(\pistar)^2+\dfrac{1-\p'+\pistar(\p'-\p)}{(\p+\p'-2)^2}\right.\\
		&\left.-\dfrac{2\pistar(1-\pistar)(\p+\p'-2)}{\p+\p'-2}\right)\\
		&=4\delta^2\left(\dfrac{1-\p'+\pistar(\p'-\p)}{(\p+\p'-2)^2}+\pistar-(\pistar)^2-2\pistar+2(\pistar)^2\right)\\
		&=4\delta^2\left(\dfrac{1-\p'+\pistar(\p'-\p)}{(\p+\p'-2)^2}-\pistar(1-\pistar)\right).
		\end{align*}
	\end{proof}
	
	\subsection{Implementation}\label{implementation2states}
	This subsection explains how we realized the implementation of the diffusion limits given in subsection \ref{Diffusion_Limits}.
	
	It is clear that assumptions \ref{A4} and \ref{A5} are fulfilled, as we are considering finite data sets. Therefore the calculations are straightforward.
	
	The question arises how to define the function $a(X_k)$. It is enough to define the values of $a_1$ and $a_2$, in lieu of the whole function $a$. We set $a_1$ to the average of upward movements and $a_2$ to the one of downwards movements respectively.
	
	The matrix $P$, which is defined as 
	\[P=\begin{pmatrix}
	\p & 1-\p\\
	1-\p' & \p'
	\end{pmatrix},\]
	contains the transition probabilities of the Markov chain $X_k$. We assign each positive stock price change to the state 1 and each negative one to 2. We count their absolute frequencies and calculate the relative frequencies to get $p_{cont}$ and $p_{cont}'$.
	
	The stationary distribution $\pistar$ of the transition matrix P satisfies $\pistar=\pistar P$ and the exclusively positive entries have to sum up to 1 since it is a probability distribution. This is equivalent to solving the problem
	\[\begin{pmatrix}
	\p-1 & 1-\p'\\
	1 & 1
	\end{pmatrix}
	\begin{pmatrix}
	\pistar_1\\
	\pistar_2
	\end{pmatrix}
	=
	\begin{pmatrix}
	0\\
	1
	\end{pmatrix}
	.\]
	We need to remark at that point why the stationary distribution exists. As stated in several books about the theory of Markov chains, as e.g. in \cite{markovchain}, a unique stationary distribution exists for a Markov chain with a finite space set as long as the Markov chain is irreducile. The state space is finite for every set of our data and looking at the transition matrices, we can see that there are no closed sets and that the Markov chain is irreducible.
	
	We calculate \(\tau^*\) by using the following result from Lemma 7 in \cite{nelson2015b}: 
	\[\frac{1}{nlog(n)}\sum_{k=1}^n \tau_k\rightarrow\tau^*,\] where n is the total number of price changes. 
	
	For the computation of \(m_{\tau}\), $\sigma^*$ and the diffusion coefficients we use the formulas given in subsection \ref{Diffusion_Limits}. 
	
	\subsection{Numerical Results}\label{Num_Res}
	The following tables depict the results gained from our computations for the \cite{lobster} data for Apple, Amazon, Google, Intel and Microsoft on 06/21/2012. We do not consider the first and last 15 minutes after opening and before closing, because we do not want to include opening and closing auctions. 
	
	\setlength{\tabcolsep}{1mm}
	\begin{table}[h]\centering
		\begin{tabular}{@{}lccccccccc@{}}
			& $p_{cont}$ & $p_{cont}'$ & $a_1$ & $a_2$ & $\sigma^2$ & $\tau^*$ & $m_\tau$ & $\dfrac{\sigma}{\sqrt{\tau^*}}$ & $\dfrac{\sigma}{\sqrt{m_\tau}}$\\ \hline
			Apple & 0.4932 & 0.4956 & 0.0170 & -0.0172 & 0.0003\jump & 0.0370 & \jump0.4026 & 0.0881 & 0.0267 \\ 
			Amazon & 0.4576 & 0.4635 & 0.0133 & -0.0134 & 0.0002\jump & 0.0892 & \jump0.9001 & 0.0412 & 0.0130 \\ 
			Google & 0.4461 & 0.4769 & 0.0308 & -0.0302 & 0.0008\jump & 0.1145 & \jump1.1291 & 0.0834 & 0.0266 \\ 
			Intel & 0.5588 & 0.6106 & 0.0050 & -0.0050 & 0.00004 & 1.3151 & 10.0897 & 0.0052 & 0.0019  \\
			Microsoft & 0.5827 & 0.6269 & 0.0050 & -0.0050 & 0.00004 & 0.8944 & \jump7.1657 & 0.0065 & 0.0023 \\ \hline
		\end{tabular}
		\caption{Numerical Results for two states}
	\end{table}
	
	In subsections 4.2 and 4.3 in \cite{ContLarrad} the authors state that their diffusion coefficients are linearly related to the standard deviation of the ten minute price changes. Therefore we calculated these standard deviations for all our data. We get an adjusted R\textsuperscript{2} of 0.9788 for the linear regression of the diffusion coefficients on the standard deviation for the balanced case and one of 0.9821 for the unbalanced case. From this it follows, that the standard deviation of the ten minutes price changes is linearly dependent on the diffusion coefficients. In both cases the regression coefficients are highly significant.
	
	\begin{figure} [t]
		\subfigure{\includegraphics[width=0.49\textwidth]{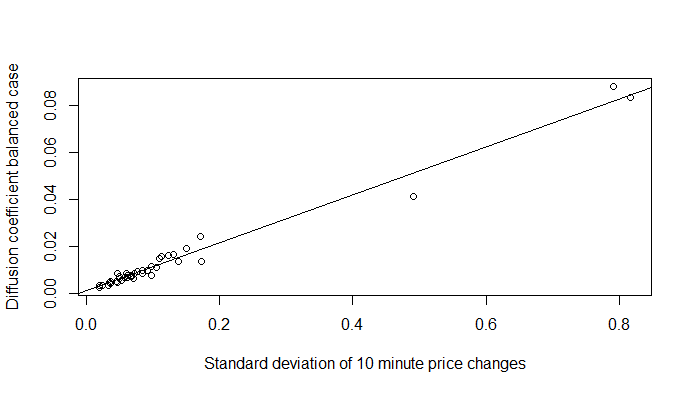}} 
		\subfigure{\includegraphics[width=0.49\textwidth]{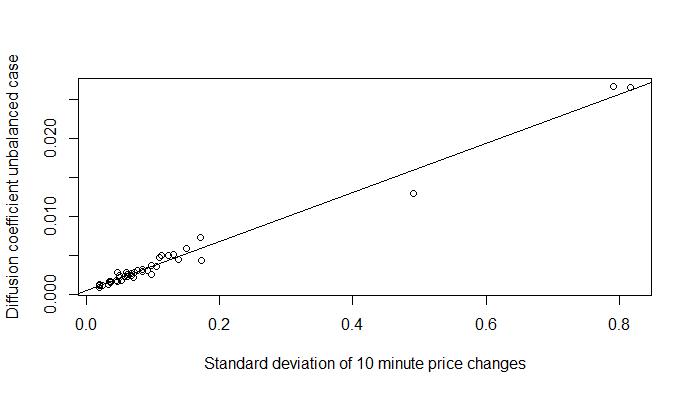}}
		\vspace*{8pt}
		\caption{Linear relationship of coefficients and standard deviation of 10 minutes mid-price changes for two states balanced case (left) and unbalanced case (right)} 
		\label{ticks}
	\end{figure} 
	
	As a comparison and justification for our model, we also calculated the regression for the diffusion coefficients resulting from the basic model of \cite{nelson2015b} allowing only for price changes of one tick. We used exactly the same data as for the regression above. The regression is plotted in figure \ref{one_tick}. The adjusted R\textsuperscript{2} for the balanced case is 0.3916, the one for the unbalanced case is 0.3813. It can clearly be seen that the linear relationship is better captured by our extended model.
	
	\begin{figure} [t]
		\subfigure{\includegraphics[width=0.49\textwidth]{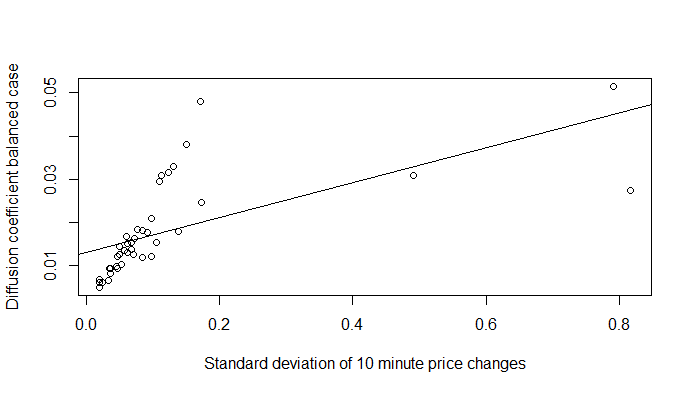}} 
		\subfigure{\includegraphics[width=0.49\textwidth]{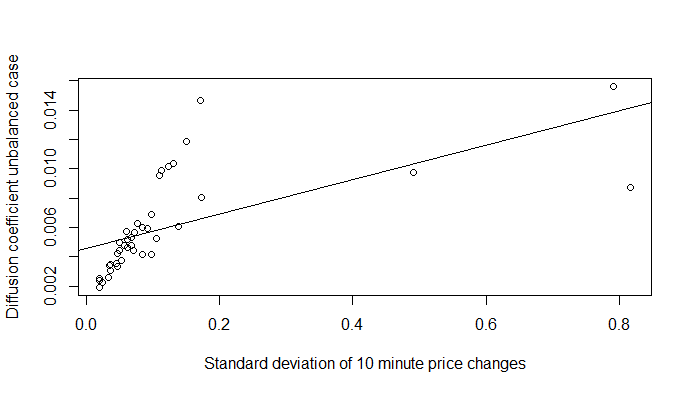}}
		\vspace*{8pt}
		\caption{Linear relationship of coefficients and standard deviation of 10 minutes mid-price changes for one tick jump sizes balanced case (left) and unbalanced case (right)} 
		\label{one_tick}
	\end{figure} 
	
	\section{General Semi-Markov Model for the Limit Order Book with arbitrary number of states}\label{severalstates}
	
	\subsection{Justification}
	
	Our next goal is to further generalize the modeling of the stock price. In the last section we assumed that the jump sizes of the stock prices can only take two values $a(1)$ and $a(2)$. Some of the available data give evidence that the price changes can take more than two values. This can be seen in figure \ref{AppleGoogle}. The exact numbers of different price changes for Apple, Amazon and Google are stated in table \ref{summary}, which justify the modeling of more than two states. 
	\setlength{\tabcolsep}{2mm}
	\begin{table}[h]\centering
		\begin{tabular}{lccc}
			& Apple & Amazon & Google \\ \hline
			States & 60 & 46 & 87\\ \hline
		\end{tabular}
		\caption{Number of different price changes of mid prices \label{summary}}
	\end{table}
	
	\begin{figure}[t]
		\centering
		\subfigure{\includegraphics[width=0.32\textwidth]{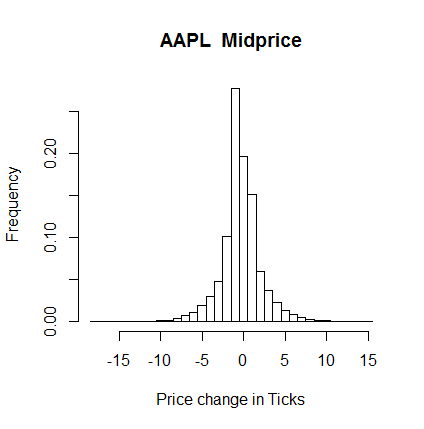}}
		\subfigure{\includegraphics[width=0.32\textwidth]{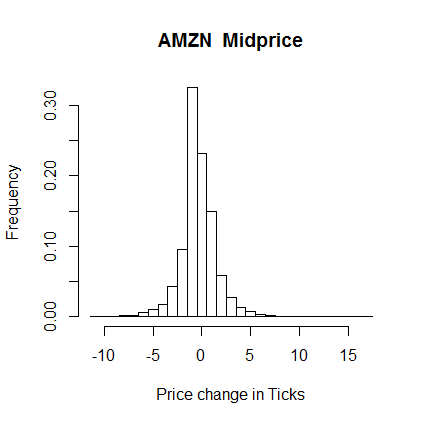}}
		\subfigure{\includegraphics[width=0.32\textwidth]{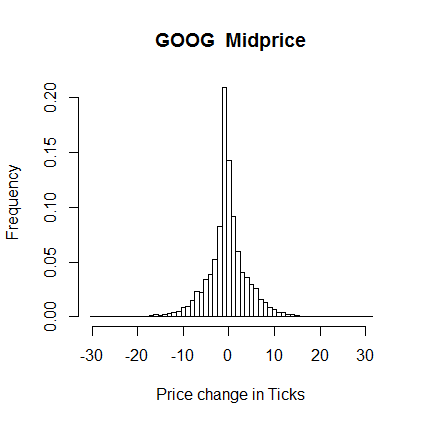}}
		\vspace*{8pt}
		\caption{Jump sizes Apple, Amazon and Google Midprices}
		\label{AppleGoogle}
	\end{figure}
	
	We will consider the following model:
	
	\begin{align*}
	s_t=\sum_{k=1}^{N(t)} a(X_k),
	\end{align*}
	where $X_k$ is a Markov chain with n states, meaning that the state space is extended to $\mathcal{S}=\{1,...,n\}$.
	
	\subsection{Diffusion Limits}
	
	For the extended model, balanced and unbalanced case are defined as in subsection \ref{Diffusion_Limits}. Assumption \ref{A4} does not change as well.
	
	\begin{ass}\label{A7}
		We assume $m(i)<\infty \text{ for all } i=1,2,\dots n$, where $m(i)$ is defined like in subsection \ref{Diffusion_Limits}.
	\end{ass}
	
	We get the following results for the diffusion limits in the new model:
	
	\begin{thm}
		
		Given that assumption \ref{A4} is satisfied for the balanced case and assumption \ref{A7} is satisfied for the unbalanced case, we can proof the following weak convergences in the Skorokhod topology (see \cite{Skorokhold}):
		
		\begin{center}
			$\left(\dfrac{s_{tnlog(n)}-N_{tnlog(n)}a^*}{\sqrt{n}},t\ge 0\right)\overset{n\rightarrow \infty}{\Rightarrow} \dfrac{\sigma^*}{\sqrt{\tau^*}}W_t$, for the balanced case and \\
			$\left(\dfrac{s_{tn}-N_{tn}a^*}{\sqrt{n}},t\ge 0\right)\overset{n\rightarrow \infty}{\Rightarrow} \dfrac{\sigma^*}{\sqrt{m_{\tau}}}W_t$, for the unbalanced case, \\
		\end{center}
		where \(W_t\) is a standard Brownian motion, \(a^*=\sum_{i \in \mathcal{S}}\pistar_i a(i)\) and \(m_\tau=\sum_{i \in \mathcal{S}}\pistar_i m(i)\). $\tau^*$ and $(\sigma^*)^2$ are given by:
		\[\tau^*=\lim_{t\rightarrow+\infty}\dfrac{t}{N_tlog(N_t)} \text{ (see \cite{nelson2015b}, p.19)} \]
		\[(\sigma^*)^2=\sum_{i \in \mathcal{S}} \pi_iv(i)\]
		\[v(i)= b(i)^2+\sum_{j\in\mathcal{S}}(g(j)-g(i))^2P(i,j)-2b(i)\sum_{j\in\mathcal{S}}(g(j)-g(i))P(i,j),\]
		where
		\begin{align*}
		b&=(b(1),b(2),...,b(n))',\\
		b(i):&=a(X_i)-a^*:=a(i)-a^* \text{ and} \\
		g:&=(P+\Pistar-I)^{-1}b.
		\end{align*}
		$P$ is a transition probability matrix, where $P(i,j)=P(X_{k+1}=j|X_k=i)$. $\Pistar$ denotes the stationary distribution of $P$ and $g(j)$ is the j\textsuperscript{th} entry of $g$.
		
	\end{thm}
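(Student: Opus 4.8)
The plan is to reduce both asserted convergences to a single functional central limit theorem for the centered partial sums of the jump sizes, and then to transfer that limit through the counting process $N(t)$ by a random time change, exactly along the lines of the proof of Theorem \ref{Theorem1}. Since $s_t = \sum_{k=1}^{N(t)} a(X_k)$, the numerators equal $R_{N_{\,\cdot}}$ with $R_m := \sum_{k=1}^m (a(X_k) - \astar)$, so each statement is the process $R$ evaluated at the random index $N_{tn\log n}$ (balanced) or $N_{tn}$ (unbalanced), rescaled by $\sqrt n$. Interpolating linearly via $U_n(t) := n^{-1/2}[(1-\lambda_{n,t})R_{\lfloor nt\rfloor} + \lambda_{n,t}R_{\lfloor nt\rfloor+1}]$ with $\lambda_{n,t} := nt - \lfloor nt\rfloor$, the whole theorem rests on proving $(U_n(t), t\ge 0) \Rightarrow \sigma^* W$ in the Skorokhod topology with the asserted $(\sigma^*)^2$. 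Because $X_k$ lives on the finite set $\mathcal{S}$ and $P$ is irreducible, the only genuinely new work relative to the two-state case is the identification of $(\sigma^*)^2$ for an arbitrary number of states.

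First I would invoke the finite-state Poisson equation. The irreducibility of $P$ on $\mathcal{S}$ guarantees that $P + \Pistar - I$ is invertible (a standard fundamental-matrix fact), so $g := (P + \Pistar - I)^{-1}b$ with $b(i) := a(i) - \astar$ is well defined. Left-multiplying $(P + \Pistar - I)g = b$ by $\pistar$ and using $\pistar P = \pistar$, $\pistar\Pistar = \pistar$ together with $\pistar b = 0$ yields $\pistar g = 0$, hence $\Pistar g = 0$ and therefore $(I-P)g = -b$, i.e.
\[ b(i) = \sum_{j\in\mathcal{S}} P(i,j)\big(g(j) - g(i)\big). \]
This is the key algebraic identity. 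Setting $D_k := g(X_k) - (Pg)(X_{k-1})$, the sequence $M_m := \sum_{k=1}^m D_k$ is a martingale for the natural filtration, and a telescoping computation gives $R_m = -M_m + (Pg)(X_m) - (Pg)(X_0)$. The boundary terms are bounded because $g$ is a finite vector, so they vanish after division by $\sqrt n$, and $U_n$ shares its limit with the linearly interpolated martingale.

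Next I would apply the invariance principle for stationary ergodic martingale differences. As the chain is finite and irreducible it is uniformly ergodic and the increments $D_k$ are bounded, so the Lindeberg condition is immediate and the normalized quadratic variation converges, by the ergodic theorem, to its stationary mean. Using $(Pg)(i) = g(i) + b(i)$, the per-state conditional variance is
\[ E\!\left[D_k^2 \mid X_{k-1} = i\right] = \sum_{j\in\mathcal{S}} P(i,j)\big(g(j) - g(i) - b(i)\big)^2 = v(i), \]
which, after expanding the square, is exactly the stated $v(i)$. Averaging against $\pistar$ gives $(\sigma^*)^2 = \sum_{i\in\mathcal{S}} \pistar_i v(i)$ and hence $(U_n(t), t\ge 0) \Rightarrow \sigma^* W$.

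Finally I would combine this with the asymptotics of $N(t)$ by a random time change, following Propositions 8 and 10 of \cite{nelson2015b}. In the unbalanced case Assumption \ref{A7} supplies finite sojourn means, so $N_{tn}/n \to t/m_\tau$ and the time-changed process converges to $(\sigma^*/\sqrt{m_\tau})W$ by Brownian scaling. The balanced case is the delicate one: there the symmetric queue dynamics make the depletion times heavy-tailed, forcing the nonstandard scaling $tn\log n$ and the constant $\tau^* = \lim_{t\to+\infty} t/(N_t\log N_t)$, so that $N_{tn\log n}/n \to t/\tau^*$ and the limit becomes $(\sigma^*/\sqrt{\tau^*})W$. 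I expect this last step, rather than the variance computation, to carry the real analytic weight: establishing the $tn\log n$-asymptotics of $N$ and justifying the Anscombe-type random time change under that normalization (where Assumption \ref{A4} enters) is the main obstacle, which I would resolve by citing the corresponding renewal limit theorems in \cite{nelson2015b} rather than re-deriving them.
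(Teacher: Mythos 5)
Your proposal is correct, and its scaffolding (the centered sums $R_n$, the linear interpolation $U_n$, and the deferral of the random time change to Propositions 8 and 10 of \cite{nelson2015b}) coincides with the paper's. Where you genuinely diverge is in how the functional CLT for $U_n$ and the variance $(\sigma^*)^2$ are obtained: the paper does not prove this step itself but quotes the general result on page 28 of \cite{swishtheory} for functionals of inhomogeneous semi-Markov processes, whose variance is expressed through the semi-Markov kernel $Q(i,j,du)=P(i,j)H(i,j,du)$, and then observes that since $f(i)=a(i)$ does not depend on the sojourn times the integrals against $H(i,j,du)$ collapse, reducing $v(i)$ to the stated formula. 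You instead give a self-contained martingale derivation: invertibility of $P+\Pi^*-I$ from irreducibility, the check $\pi^*g=0$ (using $\pi^*P=\pi^*$, $\pi^*\Pi^*=\pi^*$, $\pi^*b=0$) so that $(P+\Pi^*-I)g=b$ really yields the Poisson equation $b(i)=\sum_{j\in\mathcal{S}}P(i,j)\bigl(g(j)-g(i)\bigr)$, the telescoping decomposition $R_m=-M_m+(Pg)(X_m)-(Pg)(X_0)$ with bounded boundary terms, and the martingale FCLT with conditional variance $E\bigl[D_k^2\mid X_{k-1}=i\bigr]=\sum_{j\in\mathcal{S}}P(i,j)\bigl(g(j)-g(i)-b(i)\bigr)^2=v(i)$; I verified both the telescoping identity and this expansion, and they match the theorem's $v(i)$ exactly. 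What each route buys: the paper's citation is shorter and inherits a statement valid for sojourn-time-dependent functionals and inhomogeneous chains, at the cost of leaving the invertibility of $P+\Pi^*-I$ and the centering $\pi^*g=0$ implicit; your argument is elementary and makes transparent both why $v(i)$ is precisely a conditional variance and why the semi-Markov kernel is irrelevant at this stage, since the sojourn times enter only through $N(t)$ in the time change. Your closing assessment is also accurate and mirrors the paper: the balanced-case $tn\log(n)$ normalization and the constant $\tau^*$ carry the real analytic weight, and both you and the authors discharge that step by citing \cite{nelson2015b} rather than re-deriving it.
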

	
	\begin{proof}
		We get the following law of large numbers:
		\begin{align*}
		\dfrac{s_{tnlog(n)}}{n} \overset{n\rightarrow \infty}{\Rightarrow} \dfrac{a^* t}{\tau^*}.
		\end{align*}
		
		The proof follows the one in \cite{nelson2015b}, in which the calculation for $\sigma$ has to be adapted. Therefore we consider the more general result given on page 28 in \cite{swishtheory}. For simplification we mainly use our notations instead of the ones they used.
		
		Denote for \(t\in R_+\) (as in \cite{nelson2015b}):
		\begin{equation*}
		R_n:=\sum_{k=1}^n(a(X_k)-a^*),
		\end{equation*}
		\begin{equation*}
		U_n(t):=n^{-1/2}[(1-\lambda_{n,t})R_{\lfloor{nt}\rfloor}+\lambda_{n,t}R_{\lfloor{nt}\rfloor+1}],
		\end{equation*}
		where \(\lambda_{n,t}:= nt-\lfloor{nt}\rfloor\). In \cite{swishtheory} (page 28) it is shown for a more general case that we have the following weak convergence in the Skorokhod topology:
		\[(U_n(t),t\ge0)\overset{n\rightarrow \infty}{\Rightarrow}\sigma W,\]
		where W is a standard Brownian motion, \(\sigma^2:=\sum_i\pi(i)v(i),\) and for \(i \in \mathcal{S}:\)
		
		\begin{align*}
		v(i)=&\sum_{j\in\mathcal{S}}\int_0^{\infty}(f(i,j,u)-\alpha_f(u))^2Q(i,j,du)+\sum_{j\in\mathcal{S}}(g(j)-g(i))^2P(i,j)\\
		-&2\sum_{j\in\mathcal{S}}(g(j)-g(i))P(i,j)\int_0^{\infty}(f(i,j,u)-\alpha_f(u))H(i,j,du).
		\label{compli}
		\end{align*}
		
		Q is the kernel of the Markov renewal process
		$Q(X_n,j,t)=\mathbb{P}[X_{n+1}=j,\\\tau_{n+1}\leq t|X_n]$
		and can also be written as $Q(i,j,t)=P(i,j)H(i,j,t)$.
		$P$ is the transition matrix of the Markov chain and \mbox{$H(i,j,t):=\mathbb{P}[\tau_{n+1}\leq t|X_n=i,X_{n+1}=j]$}.
		Further the result in \cite{swishtheory} contains time inhomogeneitey. As we consider time homogeneity, $\alpha_f$ is a constant equal to $\astar$ of section \ref{twostates}. Since the bounded function $a(\cdot)$ is only dependent on one variable, we simply have a look at the case where $f(i)=a(i)$.
		Inserting the known simplifications we get 
		\begin{align*}
		v(i)&=\sum_{j\in\mathcal{S}}\int_0^{\infty}(a(i)-\astar)^2Q(i,j,du)+\sum_{j\in\mathcal{S}}(g(j)-g(i))^2P(i,j)\\
		&-2\sum_{j\in\mathcal{S}}(g(j)-g(i))P(i,j)\int_0^{\infty}(a(i)-\astar)H(i,j,du)\\
		&=\sum_{j\in\mathcal{S}}\int_0^{\infty}b(i)^2P(i,j)H(i,j,du)+\sum_{j\in\mathcal{S}}(g(j)-g(i))^2P(i,j)\\
		&-2\sum_{j\in\mathcal{S}}(g(j)-g(i))P(i,j)\int_0^{\infty}b(i)H(i,j,du)\\
		&=b(i)^2\sum_{j\in\mathcal{S}}P(i,j)\int_0^{\infty}H(i,j,du)+\sum_{j\in\mathcal{S}}(g(j)-g(i))^2P(i,j)\\
		&-2\sum_{j\in\mathcal{S}}(g(j)-g(i))P(i,j)b(i)\int_0^{\infty}H(i,j,du)\\
		&=b(i)^2\sum_{j\in\mathcal{S}}P(i,j)+\sum_{j\in\mathcal{S}}(g(j)-g(i))^2P(i,j)-2\sum_{j\in\mathcal{S}}(g(j)-g(i))P(i,j)b(i)\\
		&=b(i)^2+\sum_{j\in\mathcal{S}}(g(j)-g(i))^2P(i,j)-2b(i)\sum_{j\in\mathcal{S}}(g(j)-g(i))P(i,j).
		\end{align*}
		
		The continuing proof directly follows the proofs of Proposition 8 and Proposition 10 in \cite{nelson2015b}.
	\end{proof}
	
	\begin{rmk}
		If we have a look at a state space containing only two states $\mathcal{S}=\{1,2\}$, we get
		\begin{align*}
		v(1)&=b(1)^2+P(1,2)(g(2)-g(1))^2-2b(1)P(1,2)(g(2)-g(1))\\
		v(2)&=b(2)^2+P(2,1)(g(1)-g(2))^2-2b(2)P(2,1)(g(1)-g(2))
		\end{align*}
		where $b(i)=a(i)-a^*$. This is the same result as the one derived in section \ref{twostates}, with $P(1,2)=p(1)$, $P(2,1)=p(2)$, $g(1)=g_1$ and $g(2)=g_2$.
	\end{rmk}
	
	\subsection{Implementation}
	
	The approaches used to calculate the diffusion coefficients for the model described above are similar to the ones explained in subsection \ref{implementation2states}.
	
	The question arises how to choose the values of $a(i)$ for $i \in \mathcal{S}$. Our quantile-based approach is illustrated in this section.
	
	Having calculated the price jumps, we split the data into two parts: One containing all the negative price changes and one containing the positive price jumps. Then evenly distributed quantiles are calculated for both sets of data. Depending on the data there might occur equal values for the quantiles. In this case we decrease the number of states as long as necessary. The state values $a(i)$ are set in the following way: We calculate the average of price changes located in between two quantiles or respectively below the first quantile and above the last one. 
	
	The single price changes are assigned in the following way to the states:
	\begin{itemize}
		\item price changes smaller than the smallest quantile are assigned to state 1
		\item price changes between the i\textsuperscript{th} and j\textsuperscript{th} quantile are assigned to state j
	\end{itemize}
	
	The approach is illustrated in figure \ref{Illustration}.
	\begin{figure}[t]
		\centerline{\includegraphics[width=0.75\textwidth]{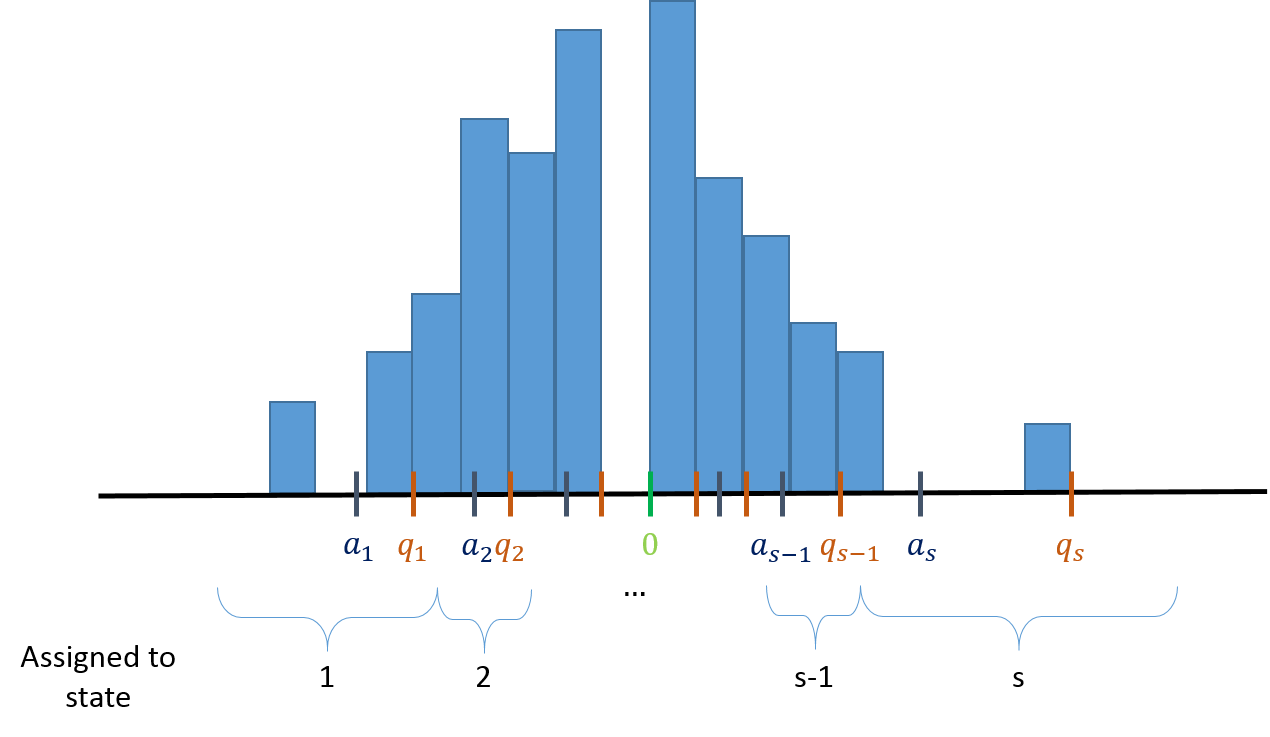}}
		\vspace*{8pt}
		\caption{Illustration}
		\label{Illustration}
	\end{figure}

	Based on these assignments, we follow the same approach as in subsection \ref{implementation2states} to calculate the transition matrix and do the remaining calculations.
	
	\subsection{Numerical Results}\label{Numerical_Results}
	The following table depicts the results gained from our computations for the \cite{lobster} data of Apple, Amazon, Google, Intel and Microsoft on 2012/06/21. The number of states we used to get the data, the matrix $P$ and the states $a(i)$ can be seen in figure \ref{PAmazon}. These two stocks serve here as examples and we can provide the same figures for Apple, Intel and Microsoft, if requested. As in subsection \ref{Num_Res} we do not consider the first and last 15 minutes of trading.
	\begin{table}[h]\centering
		\begin{tabular}{lccccc} 
			& $\sigma^2$ & $\tau^*$ & $m_\tau$ & $\dfrac{\sigma}{\sqrt{\tau^*}}$ & $\dfrac{\sigma}{\sqrt{m_\tau}}$\\ \hline
			Apple & 0.00031 & 0.0370 & \jump0.4026 & 0.0915 & 0.0277   \\ 
			Amazon & 0.00017 & 0.0892 & \jump0.9001 & 0.0433 & 0.0136 \\ 
			Google & 0.00090 & 0.1145 & \jump1.1291 & 0.0885 & 0.0282   \\ 
			Intel & 0.00004 & 1.3151 & 10.0839 & 0.0052 & 0.0019   \\ 
			Microsoft & 0.00004 & 0.8944 & \jump7.1647 & 0.0065 & 0.0023  \\ \hline
		\end{tabular}
		\caption{Numerical Results for many states, Mid Prices}
	\end{table}
	
	\begin{figure*} [b!]
		\subfigure{\includegraphics[width=0.49\textwidth]{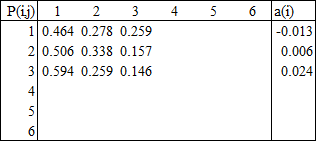}} 
		\subfigure{\includegraphics[width=0.49\textwidth]{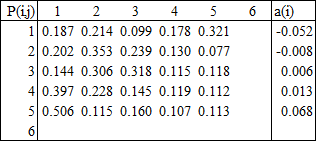}} 
		\vspace*{8pt}
		\caption{Amazon (left) and Google (right)}\label{PAmazon}
	\end{figure*} 
	
	We can show the linear relationship between the standard deviation of the ten minute price changes and the calculated diffusion coefficients, like we did in subsection \ref{Num_Res}. We calculate these standard deviations for our Apple, Amazon, Google, Intel and Microsoft data (provided in \cite{lobster}) and the data which is provided with \cite{Book}. We get an adjusted R\textsuperscript{2} of 0.9814 for the linear regression of the diffusion coefficients on the standard deviation for the balanced case and one of 0.9839 for the unbalanced case. The difference to the fit in \ref{Num_Res} is very small, which can be explained by having 25 of 40 stocks for which the algorithm sets only two states. In both cases the regression coefficients are highly significant.
	
	\begin{figure} [t]
		\subfigure{\includegraphics[width=0.49\textwidth]{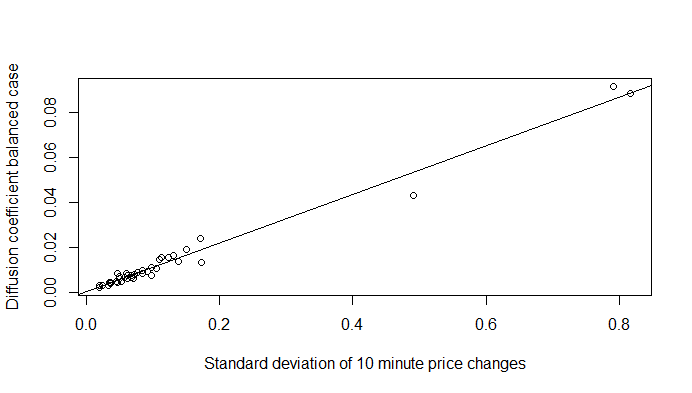}} 
		\subfigure{\includegraphics[width=0.49\textwidth]{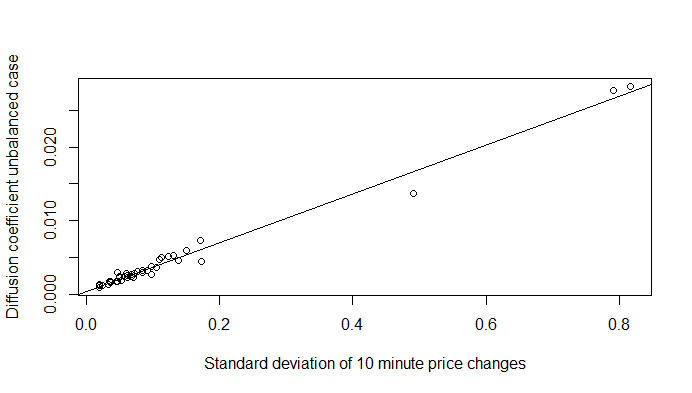}}
		\vspace*{8pt}
		\caption{Linear relationship of coefficients and standard deviation of 10 minutes mid-price changes for many states, Balanced case (left) and unbalanced case (right)} 
	\end{figure} 
	
	\section{Discussion on price spreads}
	
	A simplifying assumption used in \cite{ContLarrad} and \cite{nelson2015b} is that the spread, $p_t^a - p_t^b$ is fixed at a single tick, $\delta$. They justify this assumption by observing that over 98\% of all data points in their sample have a spread of $\delta$. Furthermore, they observe that the average lifetime, in ms, of a spread larger than a single tick is extremely small on average, often less than a single ms.\\
	In reproducing their figures, using our data, we found that over 90\% of all observations for AAPL, AMZN, and GOOG on 2012/06/21 has a spread stricly greater than $5\delta$. Table \ref{tickspread} displays the percentage of observations in some of our data sets at various tick sizes.
	
	\setlength{\tabcolsep}{1mm}
	\begin{table}[h]\centering
		\begin{tabular}{lccccccc} 
			Spread & 1 Tick & 2 Ticks & 3 Ticks & 4 Ticks & 5 Ticks&  $\geq 5$ Ticks & Avrg. Spread\\
			\hline
			AAPL & \jump0.79 & \jump1.79 & 2.10 & 2.44 &2.81 & 90.09 & 15.50\\
			AMZN & \jump1.31 &\jump1.52 & 1.74 & 2.23 & 2.78 & 90.43 & 13.59\\
			GOOG & \jump0.24 & \jump0.37 & 0.35 & 0.45 & 0.63 & 97.96 & 31.11\\
			INTC & 66.82 & 33.14 & 0.04 & 0.00 & 0.00 & 0.00 & \jump1.33\\
			MSFT & 65.63 & 34.31 & 0.05 & 0.00 & 0.00 & 0.00 & \jump1.34\\
			\hline
		\end{tabular}
		\caption{Percentage of Observations at various tick sizes for MSFT and INTC 2012/06/21 \label{tickspread}}
	\end{table}
	\setlength{\tabcolsep}{2mm}
	
	Even for our MSFT and INTC data the assumption that $p_t^a-p_t^b= \delta$ seems less warranted, given that a significant proportion of our observed data includes spreads greater than $\delta$. Figure \ref{histogramspread} displays histograms, for MSFT only, of the lifetimes of a spread greater than one tick and the lifetimes of a spread equal to one tick.
	\begin{figure}[t]
		\centerline{\includegraphics[width=\textwidth]{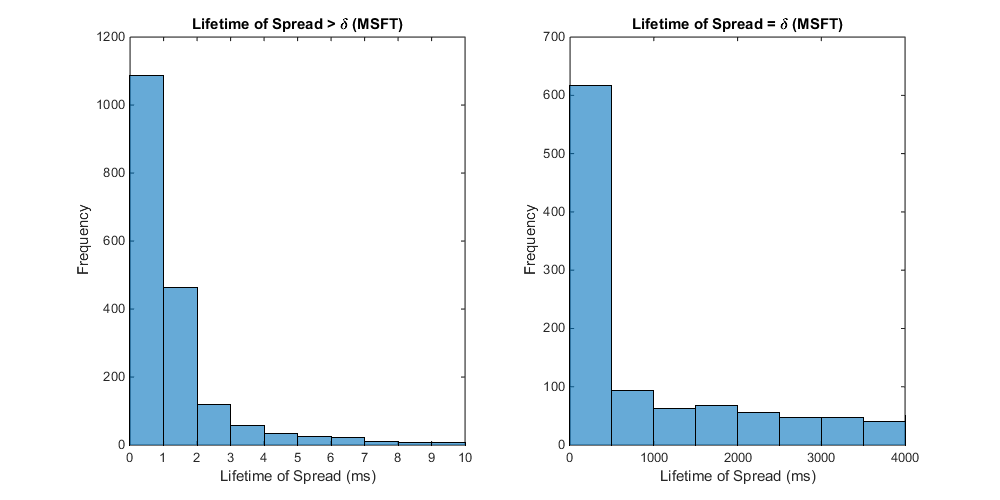}}
		\vspace*{8pt}
		\caption{Histograms of (left) lifetime of spread greater than one tick and (right) lifetime of spread equal to one tick for MSFT, 2012/06/21}
		\label{histogramspread}
	\end{figure}
	Further calculations demonstrate that, for MSFT and INTC, roughly 85\% of spreads greater than one tick had a lifetime less than 5 ms. While the assumption that spreads greater than $\delta$ are instantly filled do not hold as well for our data set, given the longer time scales used for asymptotic analysis we can justify this assumption for a subset of our sample.
	
	We believe that the larger spreads observed in our empirical data has consequences for estimated intensities of order flow. If we follow the method of \cite{ContStoikov} to estimate the rates of incoming Limit orders ($\lambda$), Market orders ($\mu$) and Cancellations ($\theta$), we only use the data points where $p_t^a-p_t^b = \delta$. The resulting estimations are included in table \ref{estTable}.
	
	For AAPL, AMZN, and GOOG, this means we use a small sample size to estimate these parameters; decreasing the accuracy of our estimations. Furthermore, since the vast majority of incoming orders occur at the best bid and ask (see \cite{ContLarrad}), and the best bid and ask are rarely $\delta$ apart, this means that we will have a much lower flow for these stocks empirically than observed in data where $p_t^a-p_t^b=\delta$ more often. For a subset of our sample, where the spread is much more frequently $\delta$, we find higher observed order intensities. However, even with the increased intensity of incoming order flow, we see that $\lambda > \mu + \theta$, which violates an important assumption of \cite{ContLarrad}. We believe that the increased instances of $p_t^a-p_t^b > \delta$ are also causing this violation. MSFT and INTC both have roughly 30\% of observed data points with spread $2\delta$. When the spread is greater than $\delta$, there is an incentive for traders to post limit orders within the spread so that their orders are executed first. We believe the resulting limit orders account for the increased estimated limit order intensity$\hat{\lambda}$.
	\begin{table}[H]\centering
		\begin{tabular}{lcc} 
			& $\hat{\lambda}$ & $\hat{\mu}+\hat{\theta}$ \\
			\hline
			MSFT & 3000.78 & 3000.72 \\
			INTC & 2483.13 & 2427.63\\
			GOOG & \jump \jump\jump 0.61 & \jump\jump \jump0.20\\
			AMZN & \jump \jump\jump3.76 & \jump\jump \jump1.15\\
			AAPL & \jump \jump\jump3.09 & \jump\jump \jump1.59\\
			\hline
		\end{tabular}
		\caption{Estimates for intensity of limit orders, market orders + cancellations in number of shares per second on 2012/06/21 \label{estTable}}
	\end{table}
	
	Given that our model focuses explicitly on the midprice, which does not include information about the observed spread, we assume $p_t^a- p_t^b = \delta$ for our model even though it is difficult to justify empirically using our data sets. As demonstrated, the diffusion limit calculated using the midprice is still a valid approximation for long run volatility of the midprice.
	
	\section{Conclusion and future work}
	
	After reviewing some of the assumptions from \cite{nelson2015b} and \cite{ContLarrad} for our sets of data, we showed how to develop the model presented in \cite{nelson2016b} (see sec. 6: Discussion). In section \ref{twostates}, we illustrated a model that considers two possible price changes different from one tick, as well as our numerical results. In section \ref{severalstates}, we further generalized the model by now considering an arbitrary number of possible price changes.
	
	As proposed in \cite{ContLarrad}, we compared the diffusion coefficients to the standard deviation of the ten minute price changes. Applying a linear regression using our available data, we tested how good our model extensions describe the linear relationship. We showed a large improvement in the adjusted R\textsuperscript{2} with the first extension, where we allowed sizes of price jumps to have a magnitude different to one tick. The second model extension included a higher number of possible sizes of price changes. The adjusted R\textsuperscript{2} increased again.
	
	Future work could take a closer look at  the model behaviour in times of crisis. It would be interesting to see if the distribution of inter-arrival times of book events changes and if a diffusion limit still exists. If evidence is found that regime switching has a big influence on the model, it would be the next step to include another Markov chain into the model containing the information in what kind of regime the market is. This extended model could be of use for practical algorithms trying to detect a switching point in price changes and make profit of that knowledge. It could also be used to detect insider trading. 
	
	Another extension of this model could challenge the assumption that the spread is fixed at $\delta$; allowing the spread to vary over time.
	\flushbottom
	\pagebreak

\end{document}